\def\twodigits#1{\ifnum #1<10 0\fi \number#1}
\def\hours{\n=\time \divide\n 60
    \m=-\n \multiply\m 60 \advance\m \time
    \twodigits\n:\twodigits\m}
\def\data{Rio de Janeiro,\  \number\day\  de \ifcase\month\or
    janeiro\or
    fevereiro\or
    mar\c{c}o\or
    abril\or
    maio\or
    junho\or
    julho\or
    agosto\or
    setembro\or
    outubro\or
    novembro\or
    dezembro\or\fi\  de \number\year}
\definecolor{gG}{RGB}{ 60, 186,  84}  %Google colors
\definecolor{gY}{RGB}{244, 194,  13}
\definecolor{gB}{RGB}{ 72, 133, 237}
\definecolor{gR}{RGB}{219,  50,  54}
\definecolor{myG}{RGB}{ 10, 220, 10}  %My colors
\def\D{\mathbb{D}}
\def\R{\mathbb{R}}
\def\diag{\mbox{\textrm{\,diag}}}
\def\Ltwo{\mbox{${\mathcal L}_2$}}
\def\Linf{\mbox{${\mathcal L}_\infty$}}
\def\sign{\mbox{\textrm{\,sign}}}
\newtheorem{lem}{\textbf{Lemma}}
\newtheorem{rem}{\textbf{Remark}}
\newmdtheoremenv{theo}{Theorem}
\newcommand{\DOI}[1]{%
  \href{http://dx.doi.org/#1}{$<$doi$>$}
}
\def\UP{\rule[-1mm]{0mm}{5mm}}
\def\IN{\!\in\!}
\def\Lplus{L_{+}}
\def\dmin{d^*}
\def\A{\mathcal{A}}
\def\Ups{\Upsilon}
\def\O{\mathcal{O}}
\begin{document}

 \def\BibTeX{{\rm B\kern-.05em{\sc i\kern-.025em b}\kern-.08em
 		T\kern-.1667em\lower.7ex\hbox{E}\kern-.125emX}}
 \markboth{\journalname, VOL. XX, NO. XX, XXXX 2024}
 {Author \MakeLowercase{\textit{et al.}}: Preparation of Papers for IEEE TRANSACTIONS and JOURNALS (xxxx 2024)}

	\title{Arbitrarily Fast Multivariable Least-squares MRAC}
	 \author{Liu Hsu, \IEEEmembership{Senior Member, IEEE}, 
     Ramon R. Costa, %\IEEEmembership{Member, IEEE}, 
     Fernando Lizarralde, \IEEEmembership{Senior Member, IEEE} and Alessandro J. Peixoto,%\IEEEmembership{Member, IEEE}
		\thanks{This work was supported in part by the National Conselho Nacional de Desenvolvimento Científico e Tecnológico (CNPq / Brazil), under Grant 302013/2019-9, FAPERJ, under Grant E-26/204.669/2024  and Coordenação de Aperfeiçoamento de Pessoal de Ensino Superior (Capes / Brazil), Finance code 001.}
        \thanks{L. Hsu, R. R. Costa, F. Lizarralde and A. J. Peixoto are with the Department of Electrical Engineering, COPPE, Federal University of Rio de Janeiro, RJ-Brazil, (email: lhsu@coppe.ufrj.br).}}
	
	\maketitle

\author{}
\thanks{}

\maketitle

\begin{abstract}
		A \textcolor{black}{novel} least-squares model-reference direct adaptive control (LS-MRAC) algorithm for multivariable (MIMO) plants is presented. The controller parameters are directly updated based on the \textcolor{black}{output} tracking error. The control law is \textcolor{black}{crucially} modified to reduce the relative degree of the error model to zero. A complete Lyapunov-based stability analysis as well as a tracking error convergence characterization is provided demonstrating that the LS-MRAC can achieve arbitrarily fast tracking while maintaining \textcolor{black}{satisfactory}  parameter convergence for appropriate adaptation gains. %This result is believed to be the first of its kind in the literature. 
        Simulation results show a significant improvement in tracking performance compared to previous methods.
\end{abstract}

\begin{IEEEkeywords}
Adaptive control, model-reference adaptive control, stability, transient performance, least-squares.
\end{IEEEkeywords}

%---------------------------------------------------------------------
\section{Introduction}%[TO BE REVISED]
Model Reference Adaptive Control (MRAC) is one of the primary approaches in adaptive control. Over the past decades, extensive research has been conducted on designing MRAC systems with proven stability and convergence properties \cite{IS:96}\cite{SB:89}\cite{Tao:2003}\cite{nguyen2018}. 

Two common alternatives for adaptation are the gradient and least-squares (LS) methods. While the former can be applied for direct or indirect adaptation, the latter has been used for indirect adaptation \cite{nguyen2018}. It is well known that LS can potentially lead to faster parameter estimation \cite{berghuis1995}\cite{krstic2009using}\cite{guler2016adaptive}, \cite{fidan2015least}. Therefore, it would be \textcolor{black}{highly interesting} %beneficial 
to develop a direct \textcolor{black}{singularity-free} LS adaptive control algorithm, %which avoids singularity problems 
to compute the control law when calculated indirectly from the identified plant parameters based on the certainty equivalence principle. An early attempt was made by \cite{ECD:1985} , but stability and tracking issues remained unresolved \cite{Tao:2014}. In \cite{Slotine}, a composite adaptive control scheme was proposed, which integrates both tracking error and prediction error into the update law.
%, but his approach requires output differentiation.
%NOTA: Em [Slotine] o grau relativo é 2. Ele usa a medida da velocidade para reduzir o grau relativo.

Recently, \cite{ZFK:2022} proposed a direct LS-MRAC scheme for relative degree one SISO plants. Unfortunately, the adaptation law requires an unavailable signal to be implemented, namely the instantaneous parameter estimation error.

In this paper, we present a solution for designing output feedback direct LS adaptive control for linear time-invariant (LTI) MIMO plants with uncertain high-frequency gain $K_p$, considering only plants with uniform relative degree one.

Besides the uncertainty of $K_p$, which is amenable by SDU factorization \cite{CHIK:03}, another challenge is how to avoid the need for output differentiation to implement the LS adaptation law. Here, we circumvent this problem with the strategy proposed in \cite{Costa:2024, Costa:2020}, which consists in introducing a modified control law derived from Monopoli’s multiplier. As a result, the output error equation becomes of relative degree zero. This property is crucial as it leads to an implementable direct LS adaptive law using output feedback for MIMO plants. 

A novelty of the new adaptive law is that it does not require normalization 
(\textcolor{black}{which tend to decrease the equivalent adaptation gain}) to ensure stability. We believe that the proposed LS-MRAC, with a complete Lyapunov-based stability analysis, is the first of its kind in the literature.

A further important contribution is that a detailed tracking error convergence analysis is presented showing that the LS-MRAC  can be designed to achieve arbitrarily fast tracking while keeping satisfactory parameter convergence for appropriate adaptation gains \textcolor{black}{under the usual persistency of excitation condition}. In this paper, only standard LS adaptation is considered. However, new LS learning methods could be explored in the future (see \cite{Pan_Shi_Ortega:2024}).

This note is organized as follow: In Sec.~\ref{MIMOcase}, the problem is stated and the assumptions are formulated. Then the controller parametrization based on SDU factorization  of the control input matrix is described in  (Sec.~\ref{sec:SDU}). The controller design follows in Sec.~{\ref{sec:design}. A fast converging LS-MRAC is described with its stability analysis in Sec.~{\ref{sec:fastLSMRAC}}. The remarkable convergence properties are characterized in  Sec.~\ref{sec:convergence}. Simulation examples are presented in Sec.~\ref{sec:simulations} to illustrate the significant improvement in the speed of adaptation compared to alternative adaptation laws.

%---------------------------------------------------------------------
\section{Multivariable MRAC}\label{MIMOcase}
%---------------------------------------------------------------------

The plant and the reference model are given by\footnote{\textcolor{black}{The symbol $s$ stands either as the differential operator or as the Laplace variable, according the context of time-domain or frequency-domain.}}
\begin{align}
  y(t) &= P(s)\,u(t)\,, \quad P(s) = K_p N(s) D^{-1}(s)\,, \label{MLS:y} \\[1mm]
  y_m(t) &= M(s)\,r\,, \label{MLS:ym}
\end{align}
where $P(s)$ and $M(s)$ are $m \!\times\! m$ transfer matrices, and $y, u, y_m, r \IN \R^m$.
The high frequency gain matrix $K_p$ and the coefficients of the $m \!\times\! m$ polynomial matrices $D(s)$ and $N(s)$ are unknown.

The objective is to track the reference model by designing the control $u$ so that $e_0=y(t)-y_m(t) \rightarrow 0$ asymptotically while all signals of the closed-loop system remain bounded (stability).

As usual, the reference signal $r(t)$ is piecewise continuous and uniformly bounded.

The following assumptions regarding $P(s)$ are required:
\begin{enumerate}%[topsep=1mm,itemsep=1mm,label=\textbf{\txtB{\arabic*)}}]
  \item the observability index $\nu$ of $P(s)$ is known;
  \item $P(s)$ has relative degree $1$;
  \item the transmission zeros of $P(s)$ have negative real parts 
 (minimum-phase assumption);   
 \item the signs of the leading principal minors $\{\Delta_i\}_{i=1}^m$ of the matrix $K_p$ are known; 
\end{enumerate}

Assumption 4) is a counterpart of a similar high-frequency gain sign assumption made for the SISO LS-MRAC.

In view of assumption 2), $M(s)$ is chosen as
\begin{align}\label{eq:M_Am}
  M(s) = (sI-A_m)^{-1}\,,% 
\end{align}
where 
\begin{align} \label{eq:Am}
  A_m = \diag \big\{ -a_i \big\}_{i=1}^m\!, \quad a_{i}> 0\,,
\end{align}

The \emph{state variable filters} are given by
\begin{align}
  \dot{v}_{1,i} &= \Lambda v_{1,i} + g u_i\,, \quad v_{1,i} \IN \R^{\nu-1}\,, \label{MLS:dv1} \\
  \dot{v}_{2,i} &= \Lambda v_{2,i} + g y_i\,, \quad v_{2,i} \IN \R^{\nu-1}\,, \label{MLS:dv2} \\[2mm]
  v_1^T &= \begin{bmatrix} v_{1,1}^T & v_{1,2}^T & \cdots & v_{1,m}^T \end{bmatrix} \IN \R^{m(\nu-1)}\,, \notag \\
  v_2^T &= \begin{bmatrix} v_{2,1}^T & v_{2,2}^T & \cdots & v_{2,m}^T \end{bmatrix} \IN \R^{m(\nu-1)}\,, \notag
\end{align}
where $\Lambda$ is a Hurwitz matrix.
The regressor vector is set as
\begin{align*}
  \omega^T &= \begin{bmatrix} v_1^T & v_2^T & y^T & r^T \end{bmatrix} \in \R^{2m\nu}\,.
\end{align*}

%---------------------------------------------------------------------
If $P(s)$ is known, then a control law which achieves matching between the
closed-loop transfer matrix and $M(s)$, i.e. $y = P(s) u^* = M(s) r =
y_m$, is given by \cite{Tao:2003}
\begin{align}\label{MLS:u*}
  u^* = \theta^{*T}_1 v_1 + \theta^{*T}_2 v_2 + \theta^*_3 y + \theta^*_4 r
      = \theta^{*T} \omega\,,
\end{align}
where $\theta^{*T} = \begin{bmatrix} \theta_1^{*T} & \theta_2^{*T} & \theta_3^* & \theta_4^* \end{bmatrix}$,
  $\theta_1^*, \theta_2^* \IN \R^{m(\nu-1) \times m}$,
  $\theta_3^* \IN \R^{m \times m}$, and
  $\theta_4^* = K_p^{-1}$.
The dynamics of the tracking error is given by \cite[p. 201]{Tao:2003}
\begin{align}\label{MLS:e0}
  e_0 &= M(s) K_p \left[u - \theta^{*T}\omega \right]\,. %\\ %\quad u=\theta^T(t) \omega\.,\\
%\label{eq:theta}
  %u &= \theta^{T}(t) \omega = \theta^{T}_1(t) v_1 + \theta^{T}_2(t) v_2 + \theta_3(t) y + \theta_4(t) r\,.
\end{align}
%

%---------------------------------------------------------------------
\subsection{$SDU$ factorization}\label{sec:SDU}

In order to circumvent strong prior limitations about $K_p$ such as its symmetry, the MIMO MRAC algorithm proposed in \cite{CHIK:03} relies on a control parametrization derived from an $SDU$ (Symmetric, Diagonal, Upper Triangular) factorization of the matrix $K_p$.
The following Lemma is central for this algorithm.
It assures the existence of an $SDU$ factorization.

\begin{lem}\label{lem1}
Every $m \times m$ real matrix $K_p$ with nonzero leading principal minors $\{\Delta_i\}_{i=1}^m$
% $\Delta_1, \Delta_2,\dots,\Delta_m$
can be factored as
\begin{equation}\label{lem1:SDU}
  K_p = SDU,
\end{equation}
where $S$ is symmetric positive definite, $D$ is diagonal, and $U$ is unity upper triangular.
\end{lem}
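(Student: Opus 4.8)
The plan is to proceed by induction on the dimension $m$. For $m=1$ the only leading principal minor is $\Delta_1=K_p\neq 0$, so one simply takes $S=|K_p|$, $D=\sign(K_p)$, $U=1$. For the inductive step I would partition $K_p$ with an $(m-1)\times(m-1)$ leading block,
\begin{align*}
K_p=\begin{bmatrix}K_{11}&k_{12}\\ k_{21}&k_{22}\end{bmatrix},\qquad k_{12},\ k_{21}^T\IN\R^{m-1},\ \ k_{22}\IN\R,
\end{align*}
and note that the leading principal minors of $K_{11}$ are precisely $\Delta_1,\dots,\Delta_{m-1}$, all nonzero. The induction hypothesis then yields $K_{11}=S_1D_1U_1$ with $S_1$ symmetric positive definite, $D_1$ diagonal and $U_1$ unity upper triangular; moreover $D_1$ is invertible since $\det K_{11}=\Delta_{m-1}\neq 0$, and so are $S_1$ and $U_1$.

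Next I would seek $S,D,U$ in the corresponding block form
\begin{align*}
S=\begin{bmatrix}S_1&s\\ s^T&\sigma_0\end{bmatrix},\quad D=\begin{bmatrix}D_1&0\\ 0&d\end{bmatrix},\quad U=\begin{bmatrix}U_1&u\\ 0&1\end{bmatrix},
\end{align*}
with $s,u\IN\R^{m-1}$ and $\sigma_0,d\IN\R$, and match the four blocks of the product $SDU$ against those of $K_p$. The $(1,1)$ block reproduces $K_{11}=S_1D_1U_1$ for free; the $(2,1)$ block forces $s^{T}=k_{21}U_1^{-1}D_1^{-1}$, which fixes $s$; the $(1,2)$ block gives $u=D_1^{-1}S_1^{-1}(k_{12}-sd)$, fixing $u$ in terms of the scalar $d$; and, after eliminating $u$, the $(2,2)$ block collapses to $(\sigma_0-s^{T}S_1^{-1}s)\,d=k_{22}-s^{T}S_1^{-1}k_{12}$. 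This is the one place where a degree of freedom is exploited: I would choose any $\sigma_0>s^{T}S_1^{-1}s$, so that the Schur complement $\sigma_0-s^{T}S_1^{-1}s$ is a prescribed positive number; by the Schur-complement criterion (together with $S_1\succ 0$) this makes $S$ symmetric positive definite, and then $d$ is uniquely determined. By construction $U$ is unity upper triangular and $D$ is diagonal.

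The only point left to check, and the only place where the hypothesis $\Delta_m=\det K_p\neq 0$ is actually used, is $d\neq 0$, so that $D$ is a genuine nonsingular factor. Since the block identity $K_p=SDU$ holds as pure algebra, taking determinants gives $\Delta_m=(\det S_1)\,(\sigma_0-s^{T}S_1^{-1}s)\,(\det D_1)\,d$; every factor on the right other than $d$ is nonzero, hence $d\neq 0$, closing the induction. I do not expect a real obstacle here: the work is essentially the bookkeeping of the four block products and verifying that the inverses $S_1^{-1},D_1^{-1},U_1^{-1}$ invoked above are legitimate, both of which are supplied by the induction hypothesis. The single genuine idea is the free positive parameter $\sigma_0$, which is what guarantees positive definiteness of the symmetric factor $S$.
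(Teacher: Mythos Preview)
Your induction is correct: the block algebra checks out, the Schur-complement choice of $\sigma_0>s^{T}S_1^{-1}s$ genuinely forces $S\succ 0$, and invertibility of $S_1,D_1,U_1$ is available from the hypothesis $\Delta_{m-1}\neq 0$, so all the inverses you invoke are legitimate. (One small overstatement: the lemma as stated only asks that $D$ be diagonal, not invertible, so the step $d\neq 0$ is a bonus rather than a necessity---but it is true under the hypothesis $\Delta_m\neq 0$ and is in fact needed later in the paper.)

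The paper, however, does not induct. It quotes the classical $LDU$ factorization $K_p=L_pD_pU_p$ (unit lower, diagonal, unit upper), splits $D_p=D_{+}D$ with an \emph{arbitrary} positive diagonal $D_{+}$, and then simply regroups
\[
K_p=\bigl(L_pD_{+}L_p^{T}\bigr)\,D\,\bigl(D^{-1}L_p^{-T}D\,U_p\bigr),
\]
reading off $S=L_pD_{+}L_p^{T}\succ 0$ and $U=D^{-1}L_p^{-T}D\,U_p$ unit upper triangular. This is shorter and, more importantly for what follows, it parameterises the whole family of $SDU$ factorizations by the single positive diagonal $D_{+}$; Lemma~\ref{lemma3} later exploits exactly this freedom by driving $D_{+}$ large to make $L_{+}=D_{+}^{-1/2}L_pD_{+}^{1/2}\to I$. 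Your free scalar $\sigma_0$ at each inductive step is morally the same degree of freedom (it corresponds to picking the diagonal entries of $D_{+}$ one at a time), but the connection to $L_p$ and $D_p$ is hidden, so your argument would need extra work before it plugs into the rest of the paper.
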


\begin{proof} From \cite{CHIK:03}.

Since all $\Delta_i$ are nonzero, there exists a unique factorization \cite{Strang:88},
\begin{equation}\label{lem1:Kp}
  K_p = L_pD_p\,U_p\,,
\end{equation}
where $L_p$ is unity lower triangular, $U_p$ is unity upper triangular, and
\begin{equation}
  D_p = \diag\left\{\Delta_1,\frac{\Delta_2}{\Delta_1},\dots,
  \frac{\Delta_m}{\Delta_{m-1}}\right\}\,.
\end{equation}
%
%where $\Delta_i$, $i \IN [1,m]$, are the leading principal minors of $K_p$.
%
Factoring $D_p$ as
\begin{equation}\label{lem1:Dp}
  D_p = D_+ D\,,
\end{equation}
where $D_+$ is a diagonal matrix with positive entries, (\ref{lem1:Kp}) is rewritten as
\begin{equation*}
  K_p = \big( L_p D_+\, L_p^T \big) D \big( D^{-1} L_p^{-T} D\, U_p \big)\,,
\end{equation*}
so that (\ref{lem1:SDU}) is satisfied by $D=D_p D^{-1}$ and 
\begin{align}
  S &= L_p\, D_+\, L_p^T\,, \label{lem1:S} \\
  U &= D^{-1} L_p^{-T} D\, U_p\,.
\end{align}
\end{proof}

Note that while the LDU factorization is unique the SDU factorization is not since $D_+>0$ is arbitrary. 

%---------------------------------------------------------------------
\subsection{Control parametrization}

Applying the $SDU$ factorization (\ref{lem1:SDU}), the error equation (\ref{MLS:e0}) is rewritten as
\begin{align}
  e_0 &= M(s) SDU [u -\theta^{*T} \omega] \notag \\
    &= M(s) SD\, [Uu - U\theta^{*T} \omega] \notag \\
    &= M(s) SD\, [u - U\theta^{*T} \omega - (I-U) u]\,. \label{MLS:e0:2}
\end{align}

Note that the matrix $(I-U)$ is strictly upper triangular.
Therefore, the control signal $u$ can be defined as a function of $(I-U)u$ without given rise to any static loop since no individual control signal $u_i$ would depend on itself.

The unknown $U$ is incorporated in the parametrization by introducing the parameter vector $\Theta^*$ via the identity
\begin{equation}
  \Theta^{*T}\Omega \equiv U \theta^{*T}\omega + (I-U)u\,,
\end{equation}
where
\begin{align}
  \Theta^{*T} &= \begin{bmatrix} \Theta_1^{*T} & \Theta_2^{*T} & \cdots & \Theta_m^{*T}\end{bmatrix}\,, \\[1mm]
  \Omega^T &= \diag\big\{ \Omega_i^T \big\}_{i=1}^{m}\,, \label{MLS:Omega}
\end{align}
and
\begin{align}\label{MLS:Omegai}
  \Omega_1^T &= [\omega^T \quad u_2 \quad u_3 \ \dots \ u_m], \notag \\
  \Omega_2^T &= [\omega^T \quad u_3 \ \dots \ u_m],\notag \\[-1.5mm]
  \vdots \notag \\[-1.5mm]
  \Omega_{m-1}^T &= [\omega^T \quad u_m],\notag \\
  \Omega_m^T &= [\omega^T]\,.
\end{align}

Each vector $\Theta_i^{*T}$ concatenates the $i$-th row of the matrix $U \theta^{*T}$ and also the non zero entries of the $i$-th row of $(I-U)$.
The error equation (\ref{MLS:e0:2}) has thus been brought to the form
\begin{align}\label{MLS:e0:3}
  e_0 &= M(s) S\,D\,[u - \Omega^T \Theta^*]\,.
\end{align}

where $\tilde u= u-\Omega \Theta^*$ is the control error or mismatch. Note that this parametrization introduces $m(m-1)/2$ additional adaptive parameters.

A special parametrization of the matrix $D_+$ is proposed in Lemma \ref{lemma3} below.
This result will be useful to conclude the stability of the MIMO extension of the LS-MRAC algorithm developed in the next in terms of the unique LDU factorization.

\textcolor{black}{\begin{lem}\label{lemma3}
% For any $A \!=\! \diag\big\{ -a_i \big\}_{i=1}^m$, $a_{i} \!>\! 0$,
% and any $m \times m$ unity lower triangular matrix $L_p$,
%
For any $A \!=\! \diag\big\{ -a_i \big\}_{i=1}^m$, $a_{i} \!>\! 0$,
and $K_p$ satisfying Lemma \ref{lem1}, there exists a factorization SDU such that % any $m \times m$ unity lower triangular matrix $L_p$,
%
%there is a positive constant $\dmin$ such that 
%
\begin{align*}
  2Q = -(AS^{-1} + S^{-1}A) > 0\,. % \quad \forall d_+>\dmin\,.
\end{align*}
\end{lem}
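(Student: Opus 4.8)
The goal is to show we can tune the free parameter $D_+ > 0$ in the SDU factorization so that $S = L_p D_+ L_p^T$ makes $-(AS^{-1} + S^{-1}A)$ positive definite, where $A = \diag\{-a_i\}$ with all $a_i > 0$. The plan is to work with $S^{-1}$ directly. Write $S^{-1} = L_p^{-T} D_+^{-1} L_p^{-1}$, and observe that as $D_+$ ranges over all positive diagonal matrices, so does $D_+^{-1}$; hence it suffices to exhibit a positive diagonal $E$ such that, with $P := L_p^{-T} E L_p^{-1} = S^{-1}$, the matrix $-(AP + PA) = AP$-type Lyapunov expression is positive definite. Equivalently, I want $A^T P + P A < 0$ with $A$ Hurwitz (it is, since $-a_i < 0$), which is exactly a Lyapunov inequality; the twist is that $P$ is constrained to the form $L_p^{-T} E L_p^{-1}$ with $E$ diagonal and free.

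The key step is a diagonal-dominance / scaling argument. Let $N := L_p^{-1}$, which is again unity lower triangular, so $P = N^T E N$ and the quadratic form is $x^T(AP+PA)x = 2\,\Real\big(x^T A N^T E N x\big)$ evaluated over real $x$; more usefully, $-(AP+PA)$ has $(j,k)$ entry $(a_j + a_k)(P)_{jk}$ up to sign. Writing $P = N^T E N$ with $N$ unity lower triangular and $E = \diag\{\varepsilon_i\}$, the diagonal entries of $P$ are $P_{kk} = \sum_{i\ge k}\varepsilon_i N_{ik}^2 \ge \varepsilon_k$ (since $N_{kk}=1$), while the off-diagonal entries are fixed bilinear expressions in the $\varepsilon_i$ and the entries of $N$. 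The plan is to choose the $\varepsilon_i$ in a strongly decreasing hierarchy — pick $\varepsilon_m$ first, then $\varepsilon_{m-1}\gg \varepsilon_m$, and so on up to $\varepsilon_1 \gg \varepsilon_2 \gg \cdots$ — so that in the matrix $M$ with $M_{jk} := (a_j+a_k)P_{jk}$, each diagonal entry $M_{kk}$ dominates the sum of absolute values of the off-diagonal entries in its row and column. Then $M = -(AP+PA)$ is strictly diagonally dominant with positive diagonal, hence positive definite by Gershgorin.

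The main obstacle is making the hierarchy argument airtight: one must verify that inflating $\varepsilon_k$ boosts $P_{kk}$ (and hence $M_{kk}$) strictly faster than it inflates the off-diagonal entries $P_{jk}$, $P_{kj}$ that involve $\varepsilon_k$. Since $P_{jk} = \sum_{i \le \min(j,k)} \varepsilon_i N_{ji}N_{ki}$, the term with the largest index $i$ present is the bottleneck; a careful induction from $k=m$ downward, at each stage choosing $\varepsilon_k$ large enough relative to the already-fixed $\varepsilon_{k+1},\dots,\varepsilon_m$, closes this. Once this is done, setting $D_+ = E^{-1}$ recovers the desired SDU factorization, and $2Q = -(AS^{-1}+S^{-1}A) = M > 0$ as claimed. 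A cleaner alternative, which I would mention as a remark, is to note that $\{L_p^{-T} E L_p^{-1} : E > 0 \text{ diagonal}\}$ is an open cone of symmetric positive definite matrices and to invoke the known fact that for any Hurwitz $A$ the set of Lyapunov solutions $\{P > 0 : A^TP + PA < 0\}$ intersects every such cone when $A$ is diagonal — but the explicit diagonal-dominance construction is the most self-contained route and is the one I would carry through.
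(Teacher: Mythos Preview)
Your overall strategy---exploit the freedom in $D_+$ via a strong geometric hierarchy of its diagonal entries---is exactly what the paper does. But your verification mechanism, strict diagonal dominance of $M$ with $M_{jk}=(a_j+a_k)P_{jk}$, does not go through. With $N=L_p^{-1}$ lower unitriangular and $P=N^TEN$, the correct formula is $P_{jk}=\sum_{i\ge\max(j,k)}\varepsilon_i N_{ij}N_{ik}$ (your off-diagonal expression has the indices and the summation range reversed, though your diagonal formula is right). For $l<k$ one then has $P_{kl}=\varepsilon_k N_{kl}+O(\varepsilon_{k+1})$, which scales with $\varepsilon_k$ at the \emph{same} rate as $P_{kk}\sim\varepsilon_k$; hence row-$k$ dominance collapses to the $\varepsilon$-free condition $2a_k>\sum_{l<k}(a_k+a_l)|N_{kl}|$, which fails whenever the strictly lower entries of $L_p^{-1}$ are large. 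Already at $m=2$ with $N_{21}=n$, row~$2$ dominance of $M$ demands $2a_2>(a_1+a_2)|n|$ regardless of $\varepsilon_1,\varepsilon_2$---yet $\det M=\varepsilon_2\big[4a_1a_2\varepsilon_1-(a_1-a_2)^2n^2\varepsilon_2\big]>0$ for $\varepsilon_1\gg\varepsilon_2$, so the hierarchy \emph{does} give $M>0$, just not via Gershgorin on $M$ itself.

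The fix, which is what the paper actually does, is to apply a congruence \emph{before} the dominance check. The paper writes $2Q=-S^{-1}D_+^{1/2}\big(L_+L_+^TA+AL_+L_+^T\big)D_+^{1/2}S^{-1}$ with $L_+:=D_+^{-1/2}L_pD_+^{1/2}$; taking $D_+=\diag\{d_+^{\,2(i-1)}\}$ makes the $(i,j)$ entry of $L_+$ equal to $\ell_{ij}/d_+^{\,i-j}$ for $i>j$, so $L_+\to I$ and $L_+L_+^TA+AL_+L_+^T\to 2A<0$ as $d_+\to\infty$. In your variables this is the same as conjugating $M$ by $E^{1/2}$: the resulting matrix is $-(\hat A+\hat A^T)$ with $\hat A=E^{1/2}NAN^{-1}E^{-1/2}$, whose strictly lower part has entries scaling like $(\varepsilon_i/\varepsilon_j)^{1/2}$ for $i>j$, which \emph{does} vanish under your hierarchy. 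Your ``cleaner alternative'' (the Lyapunov-cone intersection claim for diagonal $A$) is not a citable fact---it is essentially the content of the lemma---so the explicit construction is needed, with this congruence step inserted.
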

}
\begin{proof} 

\textcolor{black}{Let us choose 
$$D_{+}=diag\{ 1,\ d_+^2,\ d_+^4, \cdots, d_+^{2(m-1)}\}, \quad d_+>d^*\,, $$ }

where $d^*>0$ is sufficiently large. Using the factorization $D_{+} = D_{+}^{1/2} D_{+}^{1/2}$, we have with (\ref{lem1:S})
\begin{align}
  2 Q &= 
  -S^{-1} D_{+}^{1/2} \big( \Lplus \Lplus^T A + A \Lplus \Lplus^T \big) D_{+}^{1/2} S^{-1}\,. \label{lem3:2Q}
\end{align}
%
%where $\Lplus = D_{+}^{-1/2}L_p D_{+}^{1/2}$. 
%
%Denoting  $L_p=(l_{ij}), \ i,j=1,\ldots m$, we get %the following pattern from (\ref{MLS:L+})
%
%

Let us choose $\Lplus = D_{+}^{-1/2}L_p D_{+}^{1/2}$. 
Denoting  $L_p=(l_{ij}), \ i,j=1,\ldots m$, we get %the following pattern from (\ref{MLS:L+})
\begin{align} \label{MLS:L+}
  \Lplus = \begin{bmatrix}
  1 & 0 & 0 & 0 & \cdots & 0 \\[2mm]
  \dfrac{\ell_{21}}{d_+} & 1 & 0 & 0 & \cdots & 0 \\[2mm]
  \dfrac{\ell_{31}}{d_+^2} & \dfrac{\ell_{32}}{d_+} & 1 & 0 & \cdots & 0 \\
  \dfrac{\ell_{41}}{d_+^3} & \dfrac{\ell_{42}}{d_+^2} & \dfrac{\ell_{43}}{d_+} & 1 & \cdots & 0 \\
  \vdots & \vdots & \vdots & \vdots & \ddots & \vdots \\
  \dfrac{\ell_{m1}}{d_+^{m-1}} & \dfrac{\ell_{m2}}{d_+^{m-2}} & \dfrac{\ell_{m3}}{d_+^{m-3}} & \dfrac{\ell_{m4}}{d_+^{m-4}} & \!\! \cdots & 1 \end{bmatrix}\,,
\end{align}
which clearly shows that
\begin{align*}
  \lim_{d_+ \rightarrow \infty} \Lplus = I\,.
\end{align*}

This means that for a sufficiently large $d_+$, $\Lplus \Lplus^T$ is diagonal dominant.
Consequently, there exists $\dmin\!>\!0$ such that
\begin{align*}
  \Lplus \Lplus^T A + A \Lplus \Lplus^T < 0\,, \quad \forall d_+ > \dmin\,,
\end{align*}
and, as a result, from  (\ref{lem3:2Q}), $Q>0$. \hfill $\blacksquare$
\end{proof}

%---------------------------------------------------------------------
\section{Design procedure} \label{sec:design}

The control law is chosen as
\begin{equation}\label{MLS:u}
  u = L(s) \big[ \Xi^T\Theta \big] = \Omega^T \Theta + \Xi^T \dot{\Theta}\,,
\end{equation}
where
\begin{equation}\label{MLS:L}
  L(s) = (s+\ell_0)\, I\,,
\end{equation}
with $\ell_0>0$ and $I \IN \R^{m \times m}$, and
\begin{equation}\label{MLS:Xi}
  \Xi^T = L^{-1}(s) \Omega^T\,. 
\end{equation}
%-------------------------------
%
Applying the control (\ref{MLS:u}), the error equation (\ref{MLS:e0:3}) becomes
\begin{align}
%  e_0 &= M(s)SD \big[ u - \Omega^T \Theta^* \big] \\
  e_0 &= M(s)L(s)SD\, \big[\, \Xi^T\Theta - \Xi^T \Theta^* \big] \notag \\
    &= M(s)L(s)SD\, \big[\, \Xi^T \tilde{\Theta}\, \big]\,, \label{MLS:e0:4}
\end{align}
where $\tilde{\Theta} = \Theta - \Theta^*$.
Now, introduce the decomposition
\begin{align*}
  M(s)L(s) &= \diag \Big\{ \frac{s+\ell_0}{s+a_i} \Big\}_{i=1}^m \\
    &= \diag \Big\{ \frac{\alpha_i}{s+a_i} + 1 \Big\}_{i=1}^m \\
    % &= \diag \Big\{ \frac{\alpha_i}{s+a_i} \Big\}_{i=1}^m + I  \\
    &= \mathcal{A} M(s) + I\,,
\end{align*}
where
\begin{equation*}
  \mathcal{A} = \diag \big\{ \alpha_i \big\}_{i=1}^m\,, \quad \alpha_i = \ell_0 - a_i\,.
\end{equation*}

Then, the error equation becomes
\begin{align}
  e_0 &= \big( \mathcal{A} M(s) + I \big) SD \big[ \Xi^T \tilde{\Theta} \big] \notag \\
    &= \A M(s)\, SD \big[ \Xi^T \tilde{\Theta} \big] + SD \big[ \Xi^T \tilde{\Theta} \big]\,. \label{MLS:e0:5}
\end{align}
For simplicity, let us denote $\Ups$ the filtered control error ($DL^{-1} \tilde{u}$), i.e.,
\begin{equation}\label{eq:ups}
 \Ups = D \big[ \Xi^T \tilde{\Theta} \big]\,. 
\end{equation}
The total order of the system comprising the plant (\ref{MLS:y}) and filters (\ref{MLS:dv1})-(\ref{MLS:dv2}) and (\ref{MLS:Xi}) is $N = n + 4m\nu - m - 1$.
Then, defining the error state vector $e\IN\R^{N}$, we can write the following non--minimal state space realization of (\ref{MLS:e0:5})
\begin{align}
  \dot{e} &= A_c e + B_c  S \Ups\,,     \label{MLS:de} \\
  e_0 &= \A C_c e +  S \Ups\,,  \label{MLS:Ae0:2}
\end{align}
where $M(s) = C_c (sI - A_c)^{-1}B_c$,  and $\{A_c, B_c, C_c\}$ satisfies the MKY Lemma (appropriate for nonminimal state-space realizations):
\begin{align}
  A_c^TP' + P'A_c &= -2Q'\,,  \label{MLS:MKY:1} \\
  P'B_c &= C_c^T\,,           \label{MLS:MKY:2}
\end{align}
with $P'=P'^T>0$ and $Q'=Q'^T>0$.
\begin{rem}\label{rem1}
 Note that we need to filter only $\Omega_1$ to form $\Xi$.   
\end{rem} 
%
%---------------------------------------------------------------------

\begin{table}[!htb]
  \centering
  \renewcommand{\arraystretch}{1.3}
  \begin{tabular}{|l|ll|}
  \hline
  Tracking error %%
    & $e_0 = y - y_m$ \\
  \hline
  SV filters  %%
    & $\dot{v}_{1,i} = \Lambda v_{1,i} + g u_i$ &  (\ref{MLS:dv1}) \\
    & $\dot{v}_{2,i} = \Lambda v_{2,i} + g y_i$ &  (\ref{MLS:dv2}) \\
    & $v_1^T = \big[v_{1,1}^T \ \ v_{1,2}^T \ \ \cdots \ \ v_{1,m}^T \big]$ & \\
    & $v_2^T = \big[v_{2,1}^T \ \ v_{2,2}^T \ \ \cdots \ \ v_{2,m}^T \big]$ & \\
%  \hline
%  Regressor %%
    & $\omega^T = \big[v_1^T \ \ y^T \ \ v_2^T \ \ r^T \big]$ & \\[1mm]
    & $\Omega_i^T = [\omega^T\quad u_{i+1} \ \cdots \ u_m]$ & (\ref{MLS:Omegai}) \\[1mm]
    & $\displaystyle \Omega^T = \diag \big\{ \Omega_i^T \big\}_{i=1}^m$ & (\ref{MLS:Omega}) \\[0.5mm]
  \hline
  $\Xi$-filter %%
    & $\dot{\Xi} = -\ell_0\,\Xi + \Omega$ \UP & (\ref{MLS:Xi}) \\
    & $\ell_0= a \,, \quad a_i\equiv a, \quad \forall i \IN [1,m]$ & \\
  \hline
  Control %%
    & $u = \Omega^T \Theta + \Xi^T \dot{\Theta}$ \UP & (\ref{MLS:u}) \\
  \hline
  Update laws %%
    & $\displaystyle \dot{\Theta} = -\gamma\, R\, \Xi\,  \sign(D)\, e_0$  \UP & \\%(\ref{MLS:dTheta}) \\
%    & $\mathcal{A} = \diag \big\{ \alpha_i \big\}_{i=1}^m$\,, \quad $\alpha_i = \ell_0 - a_i$ \\
    & $\dot{R} = -R\, \Xi\, \Xi^T R$ & \\%(\ref{MLS:dR}) \\
%    & $\displaystyle \gamma >  \dfrac{1}{2} \Big( \min_i \big( |d_{p_i}| \big) \Big)^{-1}$ \\[1mm]
    & $R(0) = \diag\big\{ R_i(0) \big\}_{i=1}^m$ & \\[1mm]
    & $R_i(0) = R_i^T(0) >0$ & \\
    & $R_i(0) \IN \R^{N_i \times N_i}$ & \\
  \hline
  \end{tabular}
  \renewcommand{\arraystretch}{1}
  \caption{MIMO LS-MRAC algorithm.}\label{MIMO LS-MRAC table}
\end{table}

\subsection{Fast converging LS-MRAC}\label{sec:fastLSMRAC}
In this section we consider a particular case of the design characterized by a reference model given by a scaled identity matrix:
\begin{align}\label{eq:Am_identity}
A_m=-a I\,.
\end{align}

In this case ($\A=0$), we have the following non--minimal state space realization of (\ref{MLS:e0:5})

\begin{align}
  \dot{e} &= A_c e + B_c S \Ups\,, %\big[ \Xi^T \tilde{\Theta} \big]\,,     
  \label{eq:defast} \\
    e_0 &=  S \Ups\,. %\big[ \Xi^T\tilde{\Theta} \big]\,,
    \label{eq:e0fast}
\end{align}

Obviously, the state $e$ is not observable from the tracking error $e_0$ but remains important to consider for overall stability analysis. Equivalently, one has 
\begin{align}
  \dot{e} &= A_c e + B_c e_0 \,,     \label{MLS:de2} \\
    e_0 &=  S \Ups \,.  \label{eq:e0fast2}
\end{align}
Now define the matrix
\begin{align}\label{MLS:D}
  \D &= \diag\big\{ d_i I_i %\alpha^{-1}_i I_i 
  \big\}_{i=1}^m\,,
\end{align}
where $d_i$ are the diagonal elements of matrix $D$, $I_i \IN \R^{N_i \times N_i}$, $N_i = (2m\nu+m-i)$. Then, consider the time varying positive definite function
\begin{equation}\label{eq:V3}
  2V_1(\tilde{\Theta},t) =  \tilde{\Theta}^T |\D|\, R^{-1}(t)\, \tilde{\Theta}\,, 
\end{equation}
where {the covariance matrix is given by}
\begin{align*}
  R(t) &= \diag\big\{ R_i(t) \big\}_{i=1}^m\,, \\
  R_i(0) &= R_i(0)^T>0\,, \quad R_i(0) \IN \R^{N_i \times N_i}\,.
\end{align*}

%Note that $|\D| R^{-1} \!=\! R^{-1} |\D|$.
%
%
Using the fact that $\dot{R}^{-1} = - R^{-1} \dot{R} R^{-1}$, then
\begin{align}\label{eq:dV3}
  \dot{V}_1 &=  \tilde{\Theta}^T |\D| R^{-1} \dot{\Theta}
    - \frac{1}{2} \tilde{\Theta}^T |\D| R^{-1} \dot{R} R^{-1} \tilde{\Theta}\,.
\end{align}

In view of the above equation, we choose the update law for  $R$ given in Table~\ref{MIMO LS-MRAC table}. %(\ref{MLS:dR}).
% %
% % 
Since $\Xi$ is block diagonal, then $\Xi\, \Xi^T$ is also block diagonal.
Hence, for $R(0)$ block diagonal, $\dot{R}$ and $R$ result block diagonal and also
$\dot{R}^{-1} = \Xi\, \Xi^T$. 
Thus, %the derivative (\ref{eq:dV3}) becomes
\begin{align}
  \dot{V}_1 &= \tilde{\Theta}^T |\D| R^{-1} \dot{\Theta}
    + \frac{1}{2} \tilde{\Theta}^T |\D| \Xi \Xi^T  \tilde{\Theta}\,,
\end{align}
which, upon setting the update law
\begin{align}
  \dot{\Theta} &= -\gamma\, R\, \sign(\D) \Xi e_0 \notag \\ 
    &= -\gamma\, R\, \Xi\, \sign(D) e_0 \label{eq:dTheta}
   % \frac{e_0}{m^2}\,, \label{MLS:dTheta}
\end{align}
is reduced \textcolor{black}{using  (\ref{eq:ups})} to (note that $D^{-1}|D|D^{-1}=|D^{-1}|$)
\begin{align}
  \dot{V}_1 &=
  - \gamma \Ups^T\, \big( S -\frac{1}{2\gamma} |D^{-1}|\big)\, \Ups\,.  \label{eq:dV3a}
\end{align}
Then, for sufficiently large $\gamma>0$, we get
\begin{equation}\label{eq:gammalarge}
S -\frac{1}{2\gamma}|D^{-1}| \geq Q_4 >0\,,
\end{equation}
and
\begin{align}
    \dot {V}_1 &\leq  -\gamma \Ups^T Q_4 \Ups \label{eq:dV1-3}\,.
\end{align}
Next, we take into account the error dynamic equation (\ref{eq:defast}) by introducing the positive definite function of $e$
\begin{align}\label{eq:V1e}
     2 {V}_2 &= - e^T P_c e\,, 
\end{align}
such that $A^T_c P_c +P_c A_c=-Q_c<0$, for some $P_c, Q_c >0$,  which exist since $A_c$ is Hurwitz. Then, consider the Lyapunov function for the adaptive system (\ref{eq:defast}), (\ref{eq:e0fast}), and (\ref{eq:dTheta})
\begin{align}\label{eq:Vfast}
    V = & \varepsilon V_2(e)+V_1(\tilde{\Theta},t)\,, \quad \varepsilon > 0\,,
\end{align}
which has time-derivative given by
\begin{align}\label{eq:dVfast}
    \dot{V} &\leq - \begin{bmatrix} e^T  &  \Ups^T \end{bmatrix} 
    \begin{bmatrix}
        \varepsilon Q_c  &  -\varepsilon P_c B_c S \\
        0     &  \gamma Q_4
    \end{bmatrix}
    \begin{bmatrix} e \\ \Ups 
    \end{bmatrix}\,,
\end{align}
which is negative definite in $\begin{bmatrix} e^T  &  \Ups^T \end{bmatrix}$ if $\varepsilon$ is small enough.

Such condition can be simply expressed in terms of the original LDU decomposition of $K_p$. This can be concluded from the $D_+$ parameterization $S=L_pD_+ L_p^T$ and from the fact that $L_+$ (\ref{MLS:L+}) is close to the identity for large $d_+$. Indeed, (\ref{eq:gammalarge}) holds if 
\begin{equation}\label{eq:gammalarge1}
S=L_p D_+L_p^T > \frac{1}{2\gamma}|D^{-1}|=\frac{1}{2\gamma}|D_p^{-1}|\ D_+ 
\end{equation} 
or, equivalently  
\begin{equation}\label{eq:L+L+T}
L_+ L_+^T > \frac{1}{2\gamma}|D_p^{-1}|. 
\end{equation}
Since there exists $d_+$ such that $L_+$ becomes arbitrarily close to the identity, the following theorem can be proved.

\begin{theo}\label{theo1}
For the closed-loop system consisting of the plant (\ref{MLS:y}), reference model (\ref{MLS:ym}) (\ref{eq:Am_identity}), and the MIMO LS-MRAC algorithm summarized in Table \ref{MIMO LS-MRAC table}, where $\ell_0=a$, if
\begin{align}\label{Th2:gamma}
  \gamma &> \frac{1}{2} \max_i \big( |d_{p_i}|^{-1} \big)\,, \quad i \IN [1,m]\,,
\end{align}
where $\{d_{p_i}\}_{i=1}^m$ are the diagonal elements of matrix $D_p$, then: {\bf(a)} all system  signals are globally uniformly bounded and $e(t), e_0(t),  \tilde{\Theta}^T\Xi(t) \in \Ltwo \cap \Linf$ and tend to $0$ as $t \rightarrow \infty$; {\bf(b)} if $R(0) \geq c\, \gamma\, I$ for some constant $c>0$, then \textcolor{black}{the norms} $\|e_0\|^2_2,\  \|\Ups\|^2_2,  \|e-e_h\|^2_2$ 
\textcolor{black}{are of order}
$\O(\gamma^{-2})\,, $ where $e_h$ is the homogeneous exponentially decaying solution of the linear subsystem (\ref{eq:defast}).
\end{theo}

\begin{proof}  From (\ref{eq:gammalarge}) (\ref{Th2:gamma}), $\dot V \leq 0$. Then,  we conclude that $V$ is bounded. 
To show the boundedness of $e(t), \tilde{\Theta}(t)$ we must take into account that $V_1$ (as a function of $\tilde{\Theta}$) is time-varying due to the term $R^{-1}(t)$. %
To this end, note that since $R(0) = R^T(0) >0$, then from the $R$ update law%\ref{MLS:dR})
\begin{equation*}
  \dot{R}^{-1}(t) = \Xi\Xi^T\,,
\end{equation*}
which by integrating gives
\begin{equation}\label{Th2:inv Rm}
  R^{-1}(t) = R^{-1}(0) + J(t) \geq 0\,, \quad t \geq 0\,,
\end{equation}
where $ J(t) = \int_0^t \Xi(\tau)\Xi^T(\tau) d\tau.$
Now, from (\ref{eq:Vfast}) and (\ref{Th2:inv Rm}),
\begin{align*}%\label{LS:V2}
  V(e, \tilde{\Theta}) &= \varepsilon V_2(e)
    + \tilde{\Theta}^T |\D| R^{-1}(0) \tilde{\Theta}
    + \tilde{\Theta}^T |\D| J(t) \tilde{\Theta}\,.
\end{align*}

 Since all the terms in the above sum are nonnegative, they must be bounded. Consequently, $e,\ \tilde{\Theta }, \ \Theta \IN \Linf$. Since $\dot{V}$ is bounded from above by a negative definite quadratic form of $e,\Ups$, then  $e,\  \Ups \IN \Ltwo$.

 Considering \textcolor{black}{the $\Omega$ structure} in (\ref{MLS:Omegai}), we easily conclude subsequently from 
 $(\Omega_m \IN \Linf)$ that $(u_m \IN \Linf)$ $\Rightarrow$
  $(\Omega_{m-1} \IN \Linf)$ $\Rightarrow$ $(u_{m-1} \IN \Linf)$ $\Rightarrow$
  $\dots$ $\Rightarrow$ $(u_2 \IN \Linf)$ $\Rightarrow$
  $(\Omega_1 \IN \Linf)$ $\Rightarrow$ $(u_1 \IN \Linf)$.
This allows us to conclude that $\Omega, \ \Xi,\ \dot{\Xi} \IN \Linf$, and $u \IN \Linf$.
Therefore, all system signals are globally uniformly bounded.

Moreover, since $e, \tilde{\Theta}^T\Xi \IN \Ltwo \cap \Linf$ and $\dot{e}, \frac{d}{dt}\tilde{\Theta}^T\Xi \IN \Linf$ then, from Barb\v{a}lat Lemma \cite[p. 81]{Tao:2003}, it follows that $\lim_{t \rightarrow \infty} e = 0$ and $\lim_{t \rightarrow \infty} \tilde{\Theta}^T\Xi = 0$.

It is also clear that $\Ups \in \Ltwo \cap \Linf$ . Indeed, $V_1$ is bounded by its initial value since $\dot{V}_1\leq 0$. Moreover, by integrating both extreme sides of (\ref{eq:dV1-3}), $\|\Ups\|^2_{2} = \O(\gamma^{-2})$. Note that $V_1(0) = \O(\gamma^{-1})$ by the $R(0)$ assumption in {\bf (b)}. Since in (\ref{eq:defast}) $A_c$ is Hurwitz and $e_0$ is given by (\ref{eq:e0fast}), then property {\bf(b)}  holds.
\end{proof}

\subsection{Summary of the MIMO LS-MRAC}

Table \ref{MIMO LS-MRAC table} summarizes the MIMO LS-MRAC algorithm.

\begin{rem}\label{rem2}
The control law (\ref{MLS:u})  can be written as
\begin{align*}
  u %&= \Omega^T \Theta + \dot{\Theta}^T \Xi \\
    &= \Theta^T \Omega - \gamma e^T_0 \sign(D) \Xi^T R \,\Xi \,,
\end{align*}
which shows a feedback term of the tracking error $e_0$.
This seems related to the proposed algorithm fast  tracking error convergence.

\end{rem}

\subsubsection {The M-MRAC}\label{sssec:M-MRAC}
The M-MRAC algorithm described in \cite{Costa:2020} is a particular case of the LS-MRAC.
It is easily obtained by using a constant adaptation gain $\Gamma$ setting
\begin{align*}
  \dot{R} &= 0\,, \\
  R(0) &= \gamma^{-1} \diag\big\{ \Gamma_i \big\}_{i=1}^{m}\,, \\
  \Gamma_i &= \Gamma_i^T >0\,, \quad \Gamma_i \IN \R^{N_i \times N_i}\,.
\end{align*}
The M-MRAC reduces to the MRAC with gradient adaptive law applied to a relative degree zero {\em output} error equation. Thus, Remark~\ref{rem2}
applies in this case and a fast convergence of the tracking error  results. However, as illustrated in Section~\ref{sec:simulations}, the parametric convergence can be very slow.

The foregoing LS-MRAC stability analysis can be easily extended to the M-MRAC case by just setting $\dot{R}=0$. In this case, inequality (\ref{eq:gammalarge}) becomes simply $\gamma S>0$, i.e., it suffices that $\gamma$ be positive, for stability of the M-MRAC.
\subsection{Convergence analysis}\label{sec:convergence}
\textcolor{black}{First, let us present the following lemmas.} % convenient.

\begin{lem}\label{lem3}
  Given the inequality
\begin{align}\label{eq:lem3_1}
  \dot{V}_1(t) &\leq -\gamma e_1^2(t)\,,
\end{align}
where $V_1(t)\geq 0\ \forall t$, $e_1(t)$ is continuous, and $\gamma>0$.
For any given $\varepsilon>0$, such that
\begin{align}\label{eq:lem3_2}
 e^2_1(t) &\geq {\varepsilon}^2, \quad \forall t\in [t_0, t_0 +\Delta_{\varepsilon}] , 
 \end {align}
then, if $V_1(0)=c_0 \gamma^{-1}$, the time interval $\Delta_{\epsilon}$ must satisfy, 
\begin{align}\label{eq:lem3_3}
\Delta_{\varepsilon}\leq  c_0 (\varepsilon\gamma)^{-2}\,.
 \end{align}
\end{lem}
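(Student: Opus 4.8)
The plan is to integrate the differential inequality (\ref{eq:lem3_1}) over the interval $[t_0, t_0+\Delta_\varepsilon]$ and use the lower bound (\ref{eq:lem3_2}) on $e_1^2$ to produce an upper bound on how much $V_1$ can decrease; then invoke nonnegativity of $V_1$ together with the initial-value bound $V_1(0) = c_0\gamma^{-1}$ to cap the length of the interval. Concretely, from (\ref{eq:lem3_1}) we have
\begin{align*}
  V_1(t_0+\Delta_\varepsilon) - V_1(t_0) \leq -\gamma \int_{t_0}^{t_0+\Delta_\varepsilon} e_1^2(\tau)\, d\tau\,,
\end{align*}
and since $\dot V_1 \leq 0$ everywhere, $V_1$ is nonincreasing, so $V_1(t_0) \leq V_1(0) = c_0\gamma^{-1}$.

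Next I would apply the hypothesis (\ref{eq:lem3_2}): on the interval in question $e_1^2(\tau) \geq \varepsilon^2$, hence $\int_{t_0}^{t_0+\Delta_\varepsilon} e_1^2(\tau)\, d\tau \geq \varepsilon^2 \Delta_\varepsilon$. Combining with the display above and with $V_1 \geq 0$,
\begin{align*}
  0 \leq V_1(t_0+\Delta_\varepsilon) \leq V_1(t_0) - \gamma\varepsilon^2 \Delta_\varepsilon \leq c_0\gamma^{-1} - \gamma\varepsilon^2\Delta_\varepsilon\,,
\end{align*}
which rearranges to $\gamma\varepsilon^2\Delta_\varepsilon \leq c_0\gamma^{-1}$, i.e. $\Delta_\varepsilon \leq c_0(\varepsilon\gamma)^{-2}$, which is exactly (\ref{eq:lem3_3}).

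This argument is essentially a one-line energy estimate, so there is no serious obstacle; the only points requiring a little care are (i) noting that continuity of $e_1$ (hence of the integrand) makes the integral well-defined and the fundamental theorem of calculus applicable to $V_1$ along trajectories, and (ii) being explicit that the monotonicity $\dot V_1 \leq 0$ is what lets us replace $V_1(t_0)$ by the initial value $V_1(0)$ when $t_0 > 0$. I would state these two observations and then present the three-line chain of inequalities above as the complete proof.
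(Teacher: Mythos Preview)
Your proposal is correct and follows essentially the same approach as the paper: integrate the inequality $\dot V_1 \leq -\gamma e_1^2$ over $[t_0,t_0+\Delta_\varepsilon]$, use the lower bound $e_1^2\geq\varepsilon^2$ together with $V_1\geq 0$ and $V_1(0)=c_0\gamma^{-1}$, and rearrange. If anything, you are slightly more careful than the paper in explicitly invoking monotonicity of $V_1$ to pass from $V_1(t_0)$ to $V_1(0)$.
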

%.
\begin{proof}
  Suppose that initially $e^2_1(t_0)\geq \varepsilon^2$. By continuity,  (\ref{eq:lem3_2}) follows for some $\Delta_\varepsilon \geq 0$. Then, from (\ref{eq:lem3_1}) we have 
\begin{align}\label{eq:lem3_4}
0 \leq  V_1(t) &\leq V_1(0)-\gamma \varepsilon^2 \Delta_\varepsilon
\end{align}
and thus,
\begin{align}\label{eq:lem3_5}
 \Delta_\varepsilon \leq c_0(\varepsilon\gamma)^{-2}\,.
\end{align}
\end{proof}
The following result relates  $\Ltwo$-norm and $\Linf$-norm of uniformly Lipschitz-continuous functions.

\begin{lem} \label{lem4}
Let  $f(t)$ be uniformly continuous with Lipschitz constant $K$. Then, for any positive $\varepsilon$
\begin{equation}
    \|f(t)\|^2_2 = \varepsilon \ \ \Rightarrow \ \ \|f(t)\|_{\infty}\leq (3K  \varepsilon)^{1/3}.
\end{equation}
\end{lem}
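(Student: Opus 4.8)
The plan is to convert a pointwise peak of $|f|$ into a lower bound on its energy, using nothing but the Lipschitz property. First I would set $M := \|f\|_{\infty}$ and dispose of the trivial case $M=0$. Assuming for the moment that the supremum is attained at some time $t^*$ (the non-attained case is handled at the end by a supremum-approximating sequence), the Lipschitz bound gives $|f(t)| \ge M - K|t-t^*|$ for every admissible $t$ with $|t-t^*|\le M/K$. Thus $|f(t)|$ dominates the triangular profile $M-K|t-t^*|$ over a time window of length at least $M/K$; note that only a one-sided window of length $M/K$ is guaranteed when $t^*$ sits at the left endpoint $0$ of the time axis, and this endpoint configuration is precisely the worst case.

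Next I would lower-bound $\|f\|_2^2$ by integrating $|f(t)|^2$ against the triangular profile over that window. In the worst (boundary) case,
\[
  \|f\|_2^2 \;\ge\; \int_0^{M/K} (M-K\tau)^2\,d\tau \;=\; \frac{M^3}{3K}\,,
\]
whereas away from the boundary one gets the symmetric estimate $\tfrac{2M^3}{3K}$; so the constant $3$ in the statement is dictated by the endpoint case. Combining with the hypothesis $\|f\|_2^2 = \varepsilon$ yields $M^3 \le 3K\varepsilon$, i.e. $\|f\|_{\infty} \le (3K\varepsilon)^{1/3}$, which is the claim.

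Finally, if the supremum is not attained, for each $\delta\in(0,M)$ I would pick $t_\delta$ with $|f(t_\delta)| > M-\delta$ and rerun the argument with $M-\delta$ in place of $M$, obtaining $(M-\delta)^3 \le 3K\varepsilon$; letting $\delta \downarrow 0$ gives the same conclusion. I do not expect a genuine obstacle here: the only subtlety is positioning the integration window so that it remains inside the domain of $f$, and that is exactly what forces the constant to be $3$ rather than $3/2$.
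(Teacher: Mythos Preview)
Your proof is correct and follows essentially the same route as the paper: bound $|f|$ from below by a triangular profile using the Lipschitz constant, integrate its square over a one-sided window of length $M/K$, and invert the resulting inequality $\|f\|_2^2 \ge M^3/(3K)$. The only cosmetic difference is that the paper applies the argument at an arbitrary point $t_0$ and then passes to the supremum at the end, which sidesteps the attained/non-attained distinction you handle explicitly.
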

\begin{proof}
Given the Lipschitz constant $K$, we have $|f(t)|\geq |f(t_0)|- K(t-t_0), \ \  \forall t, t_0$. Let $t_m=|f(t_0)|/K +t_0$ be the time when the right-hand-side of the latter inequality is zero. Then,
\begin{align}\label{eq:lem5_1}
 \varepsilon&=\|f\|^2_2 \geq \int_{t_0}^{t_m} (|f(t_0)|-K(t-t_0))^2 dt \notag \\
& = -(|f(t_0)|-K(t-t_0))^3/3K|^{t_m}_{t_0}= |f(t_0)|^3/3K.
\end{align}

Since, this holds for any $t_0$, 
\begin{align}\label{eq:lem5_2}
 |f(t)|\leq (3K\varepsilon)^{1/3}, \ \forall t.
\end{align}
This proves the Lemma.
\end{proof}
Now we can prove the convergence theorem.

\begin{theo}[\bf Convergence Characterization]\label{theo2}
    In the LS-MRAC system of Theorem~\ref{theo1}, if the $R(0) \geq c\, \gamma\, I$ for some constant $c>0$, then the tracking error converges to an arbitrarily small residual value $\varepsilon$ within a time of order  $\O(\gamma^{-1})$, after that it remains norm-bounded by a constant of order  $\O(\gamma^{-1/2})$ and ultimately converges to zero.
\end{theo}

\begin{proof}
From (\ref{eq:e0fast}) we have 
\begin{align}\label{eq:e0dot}
 \dot{e}_0=S \dot{\Ups}=D [\dot{\Xi}^T \tilde{\Theta} + \Xi^T \dot{\tilde{\Theta}}]\,.
\end{align}
Since $\Xi, \dot{\Xi}^T \tilde{\Theta}, R, \dot{R}$ are all uniformly bounded, from (\ref{eq:dTheta}) we can write, for some constant $c_1>0$
\begin{align}\label{eq:bound_Thetadot}
  |\dot{\tilde{\Theta}}| \leq \gamma c_1 \|e_0\|_{\infty}\,.
\end{align}
Then, from (\ref{eq:e0dot}), the Lipschitz constant satisfies
$$K\leq \gamma c_2\|e_0\|_{\infty},$$
 for some $c_2>0$ and large enough $\gamma$.

From (\ref{eq:dV3a}) and (\ref{eq:V3}), noting that $R(0) \geq c\, \gamma\, I$  %(\ref{eq:e0dot}), 
    $$\|e_0\|^2_2\leq c_3/(\gamma)^2, \quad \exists c_3>0.$$ 
Now, from Lemma~\ref{lem4}, one has 
\begin{align}\nonumber
   \|e_0\|_{\infty} &\leq (3\gamma c_2 \|e_0\|_{\infty|})^{1/3} \|e_0\|_2 \\
            &\leq (3\gamma c_2 \|e_0\|_{\infty|})^{1/3} c_4 \gamma^{-1/3}\,.\label{eq:e0inftyimpl}
\end{align} 
Now, solving (\ref{eq:e0inftyimpl}) we get
\begin{align}\label{eq:e0inftyboundl}
   \|e_0\|_{\infty} \leq  c_5 \gamma^{-1/2}.     
\end{align}
Moreover, using Lemma~\ref{lem4}, after a time interval of at most $\Delta_{\varepsilon}$, the tracking error reaches the residual error given by (\ref{eq:e0inftyboundl}).
\end{proof}
    
\begin{rem}[The case of a general diagonal $A_m$ (\ref{eq:Am})]\label{rem3}
The convergence properties of the general case was verified to be similar to the particular case analysed above. A full analysis seems possible but more involved and is not pursued here. However, note that tracking a  more general model reference can be achieved with the particular model of Theorem~\ref{theo1}. Indeed, it is sufficient to filter the reference signal with  a simple first order filter given by 
$$ F(s) = \diag \left\{ \frac{s+a}{s+a_i}\right\}^m_1.$$
\end{rem}
%---------------------------------------------------------------------
\section{Simulation results}\label{sec:simulations}

To illustrate the improvement introduced by the presented MIMO algorithms two plants are considered \cite{HCIK:2001}.
The first example corresponds to the simplest possible 2I2O adaptive control problem that appears in visual servoing with an uncalibrated camera.
The system is described by
\begin{align}
  P(s) &= \diag \left\{ \frac{1}{s+2}, \frac{1}{s+2} \right\} K_p\,, \label{P1} \\
  K_p &= \begin{bmatrix} \cos \phi & \sin \phi \\
     -h \sin \phi & h \cos \phi \end{bmatrix}\,, \\
%  K_p = \begin{bmatrix} 0.54 & 0.84 \\ -0.42 & 0.27 \end{bmatrix}\,, \label{M:case2} \\
  M(s) &= \diag \Big\{ \frac{2}{s+2}, \frac{2}{s+2} \Big\}\,,
\end{align}
where the only unknown parameters are $\phi$ and $h$, respctively, the camera misalignment and the scaling factor of the image.
The simulation results are obtained with $\phi=1$ and $h=0.5$.
The corresponding matching parameters for this system are
  $\Theta_1^{*T} = [0.54 \ \ -1.68 \ \ 0]$, \
  $\Theta_2^{*T} = [0.84 \ \ 1.08]$, \
  $\|\Theta^*\| = 2.34$.

The second example is a 2I2O plant of third order.
The simulations are performed with
\begin{align}
  P(s) &= \frac{1}{s^2-1} \begin{bmatrix} s+3 & 2s \\
         -2s-4 & s+3 \end{bmatrix} \,, \label{P2} \\
  M(s) &= \diag \left\{ \frac{1}{s+1} , \frac{2}{s+2}\right\}\,.
\end{align}

This plant has poles at $s=\{1,1,-1\}$, transmission zero at $s=-1.8$, and
\begin{equation*}
  K_p = \begin{bmatrix} 1 & 2 \\ -2 & 1 \end{bmatrix}\,,
\end{equation*}

and state-space realization $$A_p = \begin{bmatrix} 1 & 0 &0\\ 0& 1&0\\0&0 &-1\end{bmatrix}; \ B_p=\begin{bmatrix}1 & 1 \\ 1& 0\\ 1 & -1\end{bmatrix}; \ C_p=\begin{bmatrix} 1 & 1 & -1 \\2 & -5 & 1\end{bmatrix}.$$
This case requires 17 parameters to be adapted, that is, $\Omega_1 \in \R^{9}$, and $\Omega_2 \in \R^{8}$.

Plant (\ref{P1}) is considered for simulations 1 to 3 with
\begin{align*}
  y_p(0) &= [1 \ \ 1]^T\,, \qquad y_M(0) = [0 \ \ 0]^T\,, \\
  r(t) &= [1+10\sin(5t) \ \ \ -1+5\sin(3t)]^T\,.
\end{align*}
All initial conditions not mentioned are set to zero.

%---------------------------------------------------------------------
\paragraph{Simulation 1.}\label{MIMO MRAC simulations}

For comparison, Fig. \ref{Fig1} shows the simulation results of the plant (\ref{P1}) with the MIMO MRAC algorithm of \cite{CHIK:03}.
The following gain is set:
\begin{align*}
  \Gamma = 10\,I\,.
\end{align*}

The large transient tracking error 
shown in Fig. \ref{Fig1} would hardly be acceptable in practice.
Increasing the gain $\Gamma$ in this algorithm does not help.

\begin{figure}[!htb]
  \centering
  \parbox[c]{3mm}{a)}\hfill
  \parbox[c]{80mm}{
    \includegraphics[width=7.0cm]{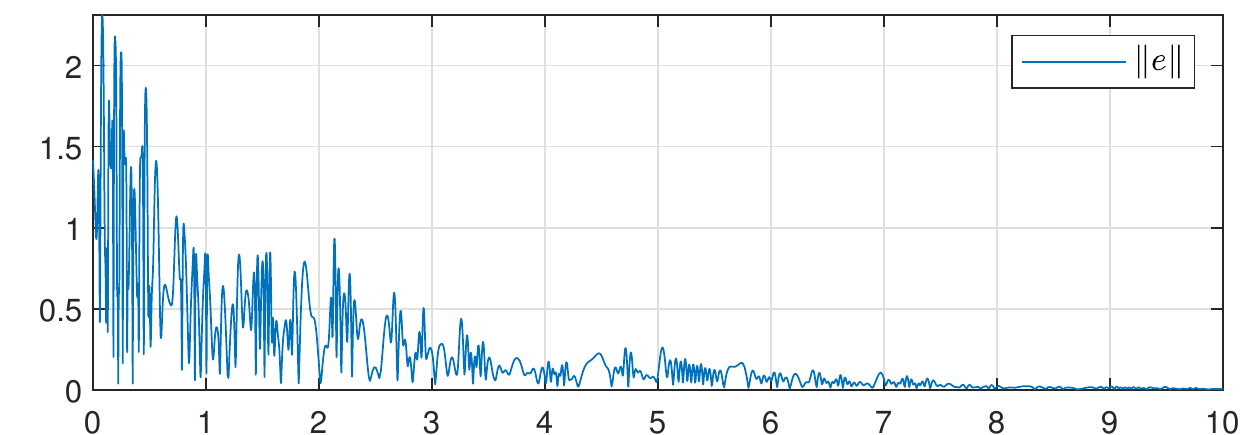}    %\|e\|
  } \\[3mm]
  \parbox[c]{3mm}{b)}\hfill
  \parbox[c]{80mm}{
    \includegraphics[width=7.0cm]{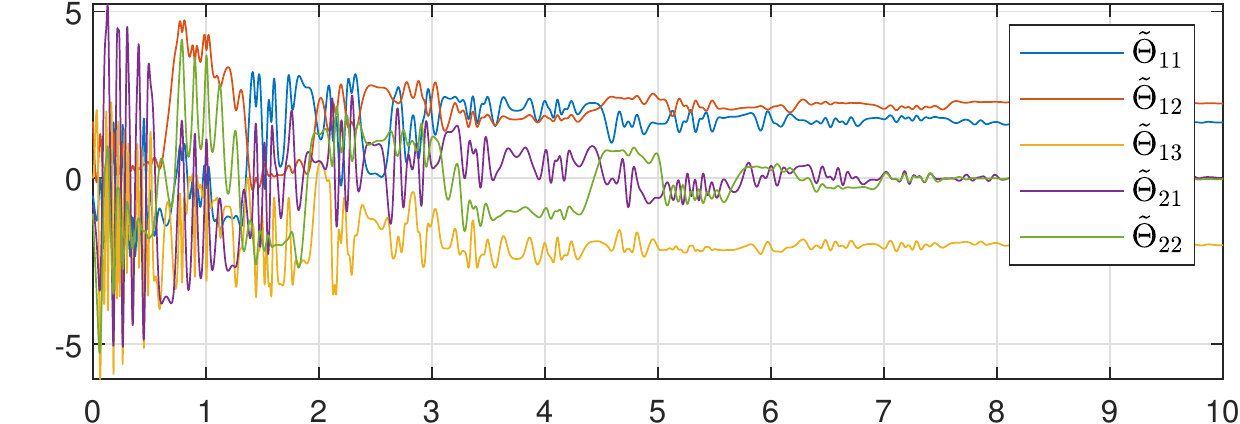}   
  }
  \caption{Simulation result of the first order plant (\ref{P1}) with the MIMO MRAC algorithm.}
  \label{Fig1}
\end{figure}

%---------------------------------------------------------------------
\paragraph{Simulation 2.}\label{MIMO M-MRAC simulation}

Figure \ref{Fig2} shows the simulation results of the plant (\ref{P1}) with the MIMO M-MRAC algorithm summarized in Table \ref{MIMO LS-MRAC table} and Section \ref{sssec:M-MRAC}.
The following data are used:
\begin{align*}
  \ell_0 &= 3\,, \quad \Gamma = 500\,I\,.
\end{align*}

The tracking error transient and the control mismatch are remarkably improved.
Here we can verify the effect of a large adaptation gain $\Gamma$, as pointed out in Remarks \ref{rem2} and \ref{rem4}.
Large $\Gamma$ reduces the tracking error and the control mismatch, but also reduces the convergence rate of the parameter (due to small $e_0$).
%

% \begin{figure}[!htb]
%   \centering
%   \parbox[c]{3mm}{a)}\hfill
%   \parbox[c]{80mm}{
%     \includegraphics[width=7.0cm]{\simu\MIMO_MRAC_225\IEEE_fig1c-eps-converted-to}    %\|e\|
%   } \\[3mm]
%   \parbox[c]{3mm}{b)}\hfill
%   \parbox[c]{80mm}{
%     \includegraphics[width=7.0cm]{\simu\MIMO_MRAC_225\IEEE_fig1e-eps-converted-to}    
%   }
%   \caption{Simulation result of the first order plant (\ref{P1}) with the MIMO M-MRAC algorithm.}
%   \label{Fig2}
% \end{figure}

%\def\PathSim{./simu/MIMO_M_MRAC_225}
\begin{figure}[!htb]
  \centering
  \parbox[c]{3mm}{a)}\hfill
  \parbox[c]{80mm}{
    \includegraphics[width=7.0cm]{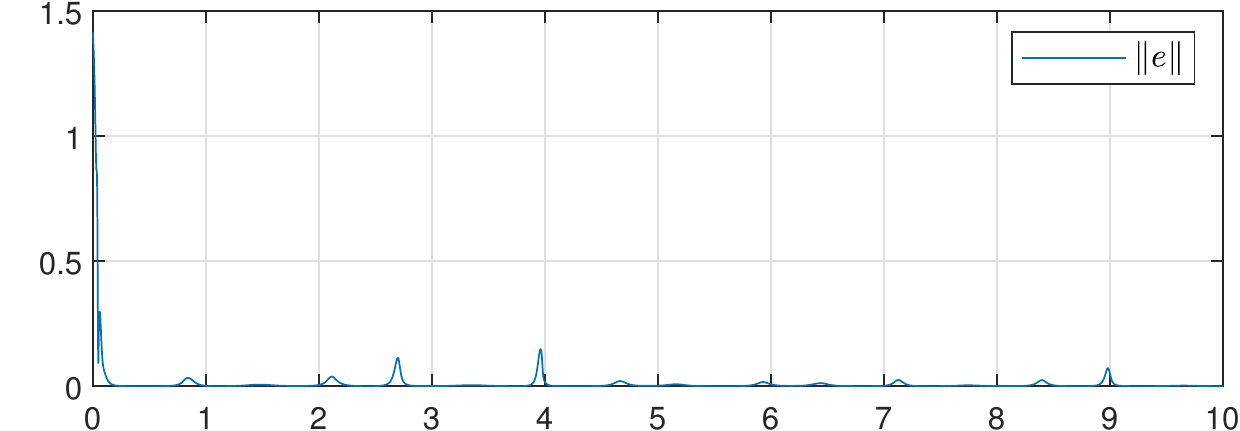}    %\|e\|
  } \\[3mm]
  \parbox[c]{3mm}{b)}\hfill
  \parbox[c]{80mm}{
    \includegraphics[width=7.0cm]{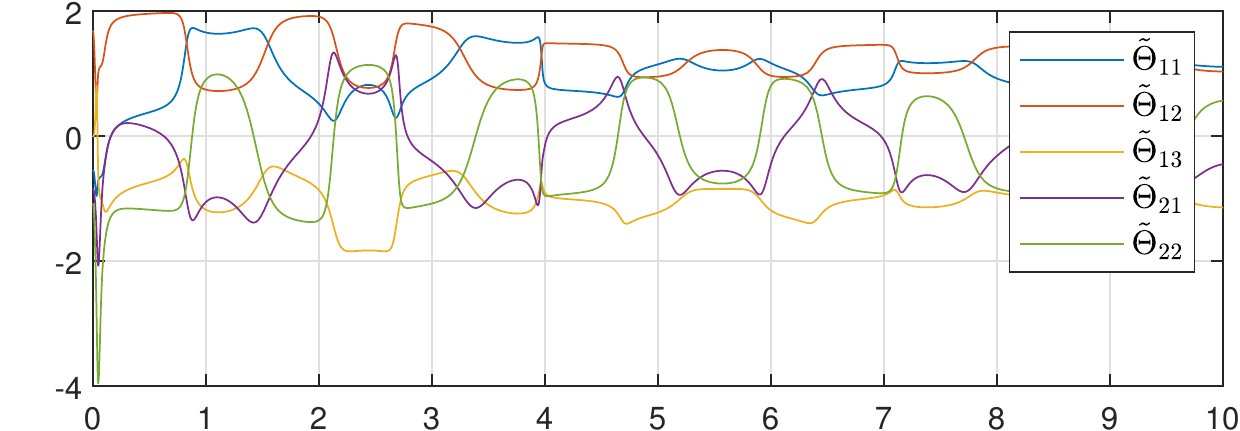}    
  }
  \caption{Simulation result of the first order plant (\ref{P1}) with the MIMO M-MRAC algorithm.}
  \label{Fig2}
\end{figure}

%---------------------------------------------------------------------
\paragraph{Simulation 3.}\label{MIMO LS-MRAC simulation3}
Figure \ref{Fig3} shows the simulation results of the plant (\ref{P1}) with the MIMO LS-MRAC algorithm summarized in Table \ref{MIMO LS-MRAC table}.
The following data are used:
\begin{align*}
  \ell_0 &= 3\,, \quad \gamma = 50\,, \quad R(0) = 20\,I\,.
\end{align*}

Note the fast and smooth convergence of the parameter even when employing large gains.

% \begin{figure}[!htb]
%   \centering
%   \parbox[c]{3mm}{a)}\hfill
%   \parbox[c]{80mm}{
%     \includegraphics[width=7.0cm]{MIMO_LS_MRAC_225_IEEE_fig1c-eps-converted-to}    %\|e\|
%   } \\[3mm]
%   \parbox[c]{3mm}{b)}\hfill
%   \parbox[c]{80mm}{
%     \includegraphics[width=7.0cm]{MIMO_LS_MRAC_225_IEEE_fig1e-eps-converted-to}    
%   }
%   \caption{Simulation result of the first order plant (\ref{P1}) with the MIMO LS-MRAC algorithm.}
%   \label{Fig3}
% \end{figure}

%\def\PathSim{./simu/MIMO_LS_MRAC_225}
\begin{figure}[!htb]
  \centering
  \parbox[c]{3mm}{a)}\hfill
  \parbox[c]{80mm}{
    \includegraphics[width=7.0cm]{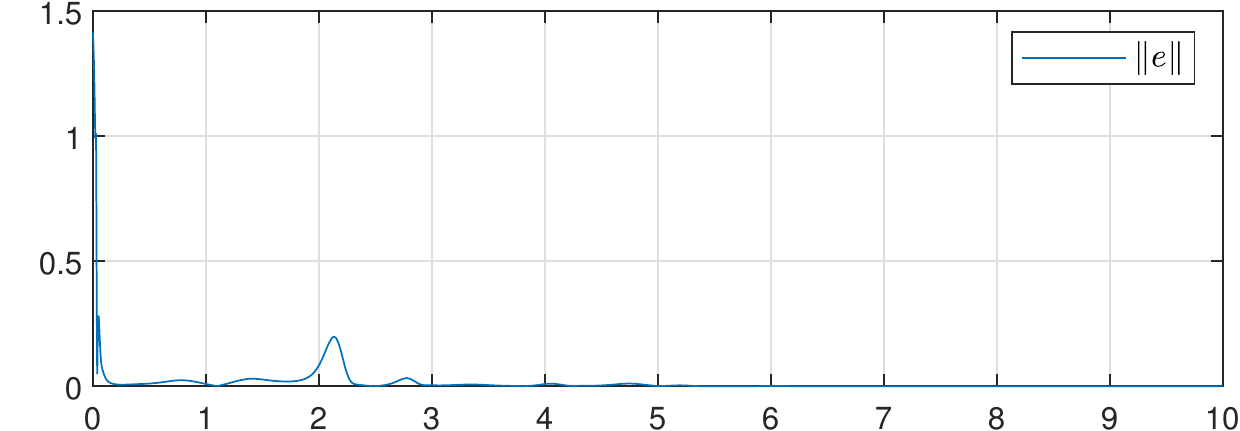}    %\|e\|
  } \\[3mm]
  \parbox[c]{3mm}{b)}\hfill
  \parbox[c]{80mm}{
    \includegraphics[width=7.0cm]{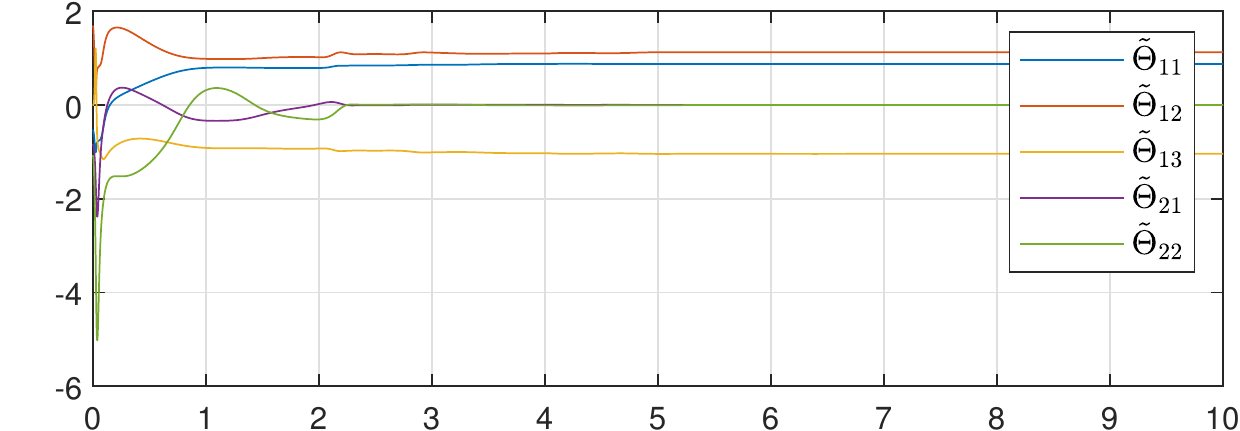}    
  }
  \caption{Simulation result of the first order plant (\ref{P1}) with the MIMO LS-MRAC algorithm.}
  \label{Fig3}
\end{figure}

%---------------------------------------------------------------------

The next simulations refer to the 3rd order plant (\ref{P2}).

\paragraph{Simulation 4.}\label{MIMO LS-MRAC simulation4}

Figures \ref{Fig4} to \ref{Fig8} show the simulation results of the plant (\ref{P2}) with the MIMO LS-MRAC algorithm.
This is a case with 17 parameters where 
\begin{align*}
  &\nu = 2\,, \quad \ell_0 = 2\,, \quad \Lambda = -2\,I\,, \quad g = I\,, \quad
  %\gamma &= 20\,, \quad R(0) = I\,, 
 A_M=-2\,I\,,\\
%\end{align*}
%
%
%The reference signals are
%
%\begin{align*}
% &y_p(0) = [1 \ \ 1]^T\,, 
%&x_p(0) = [0.65\ \ 1\ \ -0.37]\,, \\
  &r(t) = [1+10\sign(\sin(5t) )\ \ \ -1+5 \sign(\sin(3t))]^T\,.
\end{align*}
All initial conditions not mentioned are zero.

% \begin{figure}[!htb]
%   \centering
%   \parbox[c]{3mm}{a)}\hfill
%   \parbox[c]{80mm}{
%     \includegraphics[width=7.0cm]{} 
%  } \\[3mm]   
%    \parbox[c]{3mm}{b)}\hfill
%   \parbox[c]{80mm}{
%     \includegraphics[width=7.0cm]{} 
%  } 
%
%\def\PathSim{./simu/MIMO_LS_MRAC_2317}
\begin{figure}[!htb]
  \centering
  \parbox[c]{3mm}{ }\hfill
  \parbox[c]{80mm}{
    \includegraphics[width=7.0cm]{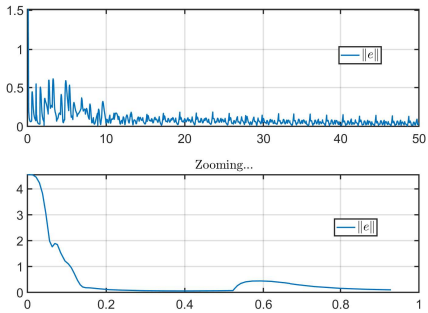} 
  }

  \caption{Simulation result of the third order plant (\ref{P2}) with the MIMO LS-MRAC algorithm, $x_p(0) = [0.65\ \ 1\ \ -0.37]^T$, $\gamma=10$ and $R(0)=I$.}
  \label{Fig4}
\end{figure}
%
% %Fig 5--------------------
% \begin{figure}[!htb]
%   \centering
%   \parbox[c]{3mm}{a)}\hfill
%   \parbox[c]{80mm}{
%     \includegraphics[width=7.0cm]{} 
%  }

%\def\PathSim{./simu/MIMO_LS_MRAC_2317}
\begin{figure}[!htb]
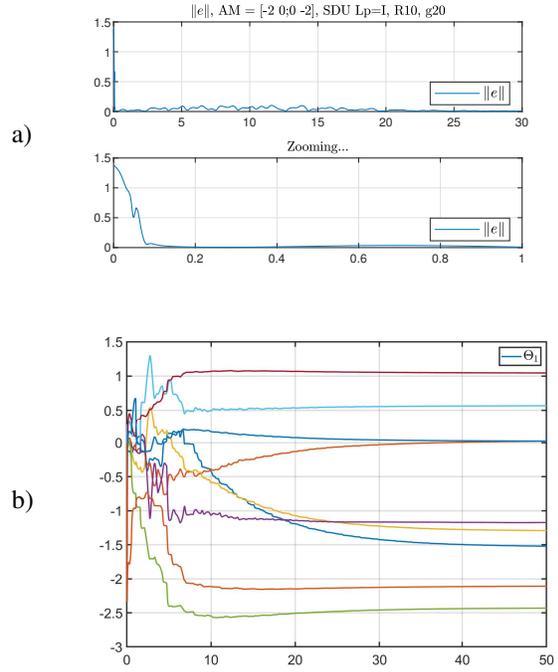

  \centering
  \parbox[c]{3mm}{a)}\hfill
  \parbox[c]{80mm}{
    \includegraphics[width=7.0cm]{MIMO_M_MRAC_2317_Fig5_e+zoom.eps} 
 }\\[3mm]   
  \parbox[c]{3mm}{b)}\hfill
  \parbox[c]{80mm}{
     \includegraphics[width=7.5cm]{MIMO_LS_MRAC_2317_Fig4_2_theta1_rev.pdf} 
 } %\\[3mm]
 
 \caption{As in Fig.~\ref{Fig4}, with larger R(0)=10. Note that the tracking transient is faster and with smaller errors, as expected. The parameters converge nicely, as shown for $\Theta_1$.}
  \label{Fig5}
\end{figure}

%Fig 6--------------------
% \begin{figure}[!htb]
%   \centering
%   \parbox[c]{3mm}{a)}\hfill
%   \parbox[c]{80mm}{
%     \includegraphics[width=7.0cm]{} 
%  }

%\def\PathSim{./simu/MIMO_LS_MRAC_2317}
\begin{figure}[!htb]
  \centering
  \parbox[c]{3mm}{~~}\hfill
  \parbox[c]{80mm}{
    \includegraphics[width=7.0cm]{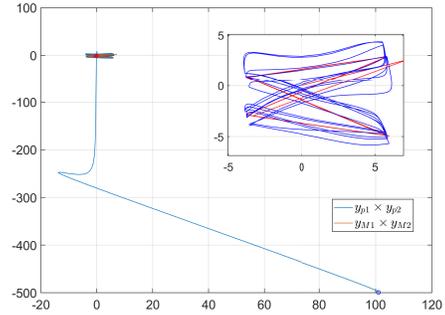} 
 }
 \caption{As in Fig.~\ref{Fig4}, with large initial conditions, $x_p(0) = [0.65\ \ 100\ \ -0.37]^T$, $\gamma=10$, $R(0)=20$. Large residual errors result due to large transients in the adaptive parameters. The subpicture displays the residual set around the reference trajectory consisting of a poligonal Lissajoux pattern (in red).}
  \label{Fig6}
\end{figure}

%Fig 7--------------------
% \begin{figure}[!htb]
%   \centering
%   \parbox[c]{3mm}{a)}\hfill
%   \parbox[c]{80mm}{
%     \includegraphics[width=7.0cm]{} 
%  }

%Fig 7--------------------
%\def\PathSim{./simu/MIMO_LS_MRAC_2317}
\begin{figure}[!htb]
  \centering
  \parbox[c]{3mm}{~~}\hfill
  \parbox[c]{80mm}{
    \includegraphics[width=7.0cm]{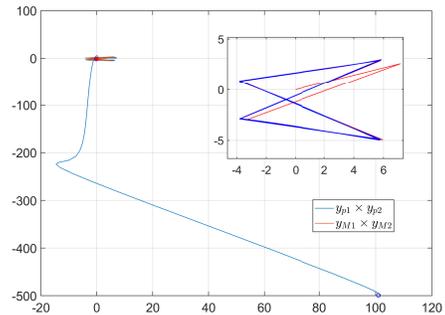} 
 }
 \caption{As in Fig.~\ref{Fig6}, with large initial conditions but with switched $\sigma$-modification ($0$ to $10$). Small residual errors result due to reduced transients in the adaptive parameters.}
  \label{Fig7}
\end{figure}

% %Fig 8--------------------
% \begin{figure}[!htb]
%   \centering
%   \parbox[c]{3mm}{a)}\hfill
%   \parbox[c]{80mm}{
%     \includegraphics[width=7.0cm]{} 
%  }
%
%Fig 8--------------------
%\def\PathSim{./simu/MIMO_LS_MRAC_2317}
\begin{figure}[!htb]
  \centering
  \parbox[c]{3mm}{~~}\hfill
  \parbox[c]{80mm}{
    \includegraphics[width=7.0cm]{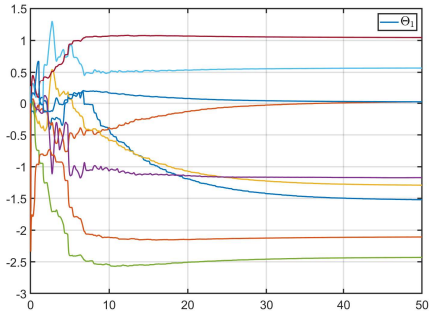} 
 }
 \caption{Parameter convergence ($\Theta_1$) for the case of Figure\ref{Fig7} with  initial conditions $x_p(0) = [0.65\ \ 1\ \ -0.37]^T$. Similar behavior occurs with $\Theta_2$.}
  \label{Fig8}
\end{figure}

%\vspace{-2cm}

\subsection{On Parametric Convergence}
Since $A_c$ is Hurwitz, the state error asymptotically tends to a small $\Ltwo$ signal. This implies that the regressor $\xi(t)$ is asymptotically persistently exciting (PE) if the reference signal is rich and $\mathcal{O}(\gamma^{-1})$ is sufficiently small. Consequently, the parameters converge to the matching values with the regressor approaching the model regressor $\xi_m$, thus, inheriting the favorable convergence properties for the least-squares (LS) identification of the linear regression problem (\ref{MLS:e0}) with $\xi=\xi_m$. This is illustrated in Fig.~\ref{Fig3}b.

Figures \ref{Fig6} and \ref{Fig7} illustrate the transient behavior for very large initial conditions in the plant state. In Fig.~\ref{Fig6}, a rather large residual error persists. A possible explanation is that the transient excursions of some adaptive parameters is quite large, causing the residual error to remain significant for a longer period. This is predicted by Lemma 4, as the Lipschitz constant $K$ becomes large during the transient phase, and the PE property of the reference regressor may be weakened. This poor transient and steady-state residual can be easily avoided by introducing a switched $\sigma$-modification \cite{IS:96} to reduce the initial transient of the adaptive parameters. Consequently, the residual error almost vanishes, as shown in Fig.~\ref{Fig7}.
\begin{rem}\label{rem4} A similar tracking behavior as in Figs.~\ref{Fig4} or \ref{Fig5} occurs with the M-MRAC during the initial transient in Fig.\ref{Fig2} showing a fast tracking error convergence to a small residual set. However, the parametric adaptation will follow the gradient identification process rather than the faster LS identification process. In contrast to  Fig.~\ref{Fig3} (LS-MRAC), Figure~\ref{Fig2} (M-MRAC shows that the tracking is precise but parameter convergence is far from being reached.
\end{rem}

%---------------------------------------
\section{Conclusion}\label{Conclusion}
A direct least-squares model-reference adaptive control (LS-MRAC) is proposed for MIMO systems with uniform relative degree one, accompanied by a complete Lyapunov-based stability analysis and convergence characterization. To the best of our knowledge, this represents the first such solution in the adaptive control literature. The MIMO LS-MRAC  achieves, in ideal conditions,  arbitrarily fast tracking error convergence while maintaining parameter convergence for appropriate adaptation gains. Simulations are presented to illustrate  the significant improvement in both tracking and parameter convergence behavior. LS-adaptation with covariance reset and forgetting factor, as well as recently proposed LS identification methods \cite{Pan_Shi_Ortega:2024} for superior performance and robustness are promising topics for future research. The extension of the algorithm to the case of  general relative degree is a significant and challenging open problem.

%\vspace{0cm}

% (...)
%---------------------------------------------------------------------

%\bigskip%
%\noindent%
%\textbf{Acknowledgements} --- This study was financed in part by the Coordena\c{c}\~{a}o de Aperfei\c{c}oamento de Pessoal de N\'{i}vel Superior - Brasil (CAPES) - Finance Code 001, by the Conselho Nacional de Desenvolvimento Cient\'{i}fico e Tecnol\'{o}gico (CNPq), and by the Fundação Carlos Chagas Filho (FAPERJ) under grant E-26/204.669/2024.

%---------------------------------------------------------------------

% References

\bibliographystyle{ieeetr} %{abbrvnat} %{unsrtnat} %{plainnat}
\bibliography{MIMO-LS-MRAC.bib,GDiffRefs-obcat.bib,krstic_automatica2009,fidan_copilot}

% 

%---------------------------------------------------------------------
\end{document}